\theoremstyle{plain}
\newtheorem{theorem}{Theorem}
\newtheorem{proposition}{Proposition}
\newtheorem{lemma}{Lemma}
\theoremstyle{remark}
\newcommand\encircle[1]{%
	\tikz[baseline=(X.base)] 
	\node (X) [draw, shape=circle, inner sep=0] {\strut #1};}
\newcommand\ensquare[1]{%
	\tikz[baseline=(X.base)] 
	\node (X) [draw, shape=rectangle, inner sep=.1cm] {\strut #1};}
\newcommand{\poi}{\textup{Rw}} 
\newcommand{\poik}{\textup{RkI}}
\newcommand{\da}{\mathtt{DA}}
\newcommand{\rmm}{\mathtt{RM}} 
\newcommand{\rw}{\mathtt{RW}}
\newcommand{\rk}{\textup{rk}}
\newcolumntype{C}{>{$}c<{$}} 
\newcolumntype{C}{>{$}c<{$}} 
\begin{document}
	\begin{frontmatter}
		
		\title{What Pareto-Efficiency Adjustments Cannot Fix}

				\begin{aug}
			\author[id=au1,addressref={add1}]{\fnms{Josu\'e}~\snm{Ortega}}
			\author[id=au2,addressref={add2}]{\fnms{Gabriel}~\snm{Ziegler}}
			\author[id=au3,addressref={add3}]{\fnms{R. Pablo}~\snm{Arribillaga}}
			\author[id=au4,addressref={add4}]{\fnms{Geng}~\snm{Zhao}}
			
			\address[id=add1]{\orgname{Queen's University Belfast}}
			
			\address[id=add2]{\orgname{University of Edinburgh}}
			
			\address[id=add3]{\orgdiv{Instituto de Matemática Aplicada San Luis},
				\orgname{Universidad Nacional de San Luis, and CONICET}}
			
			\address[id=add4]{\orgname{University of California, Berkeley}}
		\end{aug}
		
		\support{Emails: \href{mailto:j.ortega@qub.ac.uk}{j.ortega@qub.ac.uk}, \href{mailto:ziegler@ed.ac.uk}{ziegler@ed.ac.uk}, \href{mailto:rarribi@unsl.edu.ar}{rarribi@unsl.edu.ar},  \href{mailto:gengzhao@berkeley.edu}{gengzhao@berkeley.edu}.}


\begin{abstract}
The Deferred Acceptance (DA) algorithm is stable and strategy-proof, but can produce outcomes that are Pareto-inefficient for students, and thus several alternative mechanisms have been proposed to correct this inefficiency. However, we show that these mechanisms cannot correct DA's rank-inefficiency and inequality, because these shortcomings can arise even in cases where DA is Pareto-efficient.

We also examine students' segregation in settings with advantaged and marginalized students. We prove that the demographic composition of every school is perfectly preserved under any Pareto-efficient mechanism that dominates DA, and consequently fully segregated schools under DA maintain their extreme homogeneity. \\
\end{abstract}

		\begin{keyword}
			\kwd{school choice}
			\kwd{inequality}
			\kwd{rank-efficiency}
			\kwd{segregation}
		\end{keyword}
		
		\begin{keyword}[class=JEL] 
			\kwd{C78}
			\kwd{D47}
		\end{keyword}

\end{frontmatter}

\onehalfspacing	
\newpage
\setcounter{secnumdepth}{3} 

\section{Introduction}
\label{sec:introduction}
School choice mechanisms shape the educational opportunities of millions of students worldwide. Among these, the student-proposing Deferred Acceptance (DA) algorithm has emerged as the gold standard, adopted by major cities from New York to Paris. Its appeal lies in two compelling properties: it produces stable matchings that respects schools' priorities while incentivizing parents to truthfully report their preferences. Yet these virtues come at a cost---DA often generates Pareto-inefficient outcomes, leaving some students worse off than necessary. This has been observed in theoretical and empirical studies alike \citep[e.g.][]{abdulkadirouglu2003,abdulkadirouglu2009strategy,kesten2010school}.

DA's Pareto inefficiency has sparked a growing literature in market design, leading to the construction of sophisticated mechanisms that improve students' welfare compared to DA's baseline while preserving weaker forms of stability and strategy-proofness. The point of this paper is that such efficiency-adjusted mechanisms may not correct other problems of DA-based allocations that have received less attention in the literature: namely their rank-inefficiency, inequality, and segregation. This concern is underscored by the observation that Pareto-efficiency may be \emph{``a very limited kind of success, and in itself may or may not guarantee much''} \citep[p.31--32]{sen1999ethics}.

\paragraph{A Motivating Example.}To illustrate the problems of inequality and rank-inefficiency, consider the school choice problem in Table \ref{tab:example1} below with six students ($i_1,\ldots,i_6$) and six schools ($s_1, \ldots, s_6$) with unit capacity each.
	
	\begin{table}[h!]
		\centering
		\caption{\centering DA allocation in squares.}
		\label{tab:example1}
		\begin{tabular}{CCCCCC|CCCCCC}
			\toprule
			i_1 & i_2 & i_3 & i_4 & i_5 & i_6 & s_1 & s_2 & s_3 & s_4 & s_5 & s_6 \\
			\midrule
			\ensquare{$s_1$} 	 & \encircle{$s_1$} & \encircle{$s_2$} & \encircle{$s_3$} & \encircle{$s_4$} & \encircle{$s_5$} & i_1 & i_2 & i_3 & i_4 & i_5 & i_6  \\
			\encircle{$s_6$} 	 & \ensquare{$s_2$} & s_1 & s_1 & s_1 & s_1 & i_2 & i_3 & i_4 & i_5 & i_6 & i_1   \\
			&    & \ensquare{$s_3$} & s_2 & s_2 & s_2 & i_3 & i_4 & i_5 & i_6 &    &    \\
			&    &    & \ensquare{$s_4$} & s_3 & s_3 & i_4 & i_5 & i_6 &    &    &    \\
			&    &    &    & \ensquare{$s_5$} & s_4 & i_5 & i_6 &    &    &    &    \\
			&    &    &    &    & \ensquare{$s_6$} & i_6 &    &    &    &    &   
		\end{tabular}
	\end{table}

	The unique stable allocation  appears in squares, with every student assigned to the school with the same index. Three students are assigned to schools in the bottom half of their preferences, with student $i_6$ assigned to his least-preferred school. Note that this allocation is already Pareto-efficient.
	Interestingly, an alternative Pareto-efficient allocation, in circles, is remarkably more compelling because of two important reasons:
	
	\begin{enumerate}
		\item \emph{it is more egalitarian}: every student is assigned to either the first or second ranked school, unlike in DA where some student ends up in his worst-ranked school,
		
		\item \emph{it is more rank-efficient}: the median student gets assigned to his first choice, rather than to a school in the middle of his preference list, with five out of six students preferring the allocation in circles.\\		
	\end{enumerate}

Interestingly, because DA's allocation was Pareto-efficient in our previous example, any Pareto-efficient mechanism that weakly Pareto dominates DA---to which we henceforth refer as \emph{stable-dominating} \citep{alva2019stable}---also selects the same allocation.\footnote{Alva and Manjunath only require that a stable-dominating matching weakly Pareto-dominates some stable matching, not necessarily the student-optimal one. We impose the stronger restriction motivated by our focus on mechanisms that improve on student-proposing DA, which is the status quo in many school districts.}
Thus, the previous example illustrates a concern of stable and stable-dominating mechanisms: they may select highly unequal and rank-inefficient Pareto-efficient allocation when alternative Pareto-efficient allocations are more appealing.\footnote{This flaw is not exclusive of stable-dominating mechanisms: the top trading cycle (TTC) mechanism is neither stable nor stable-dominating and also selects the rank-inefficient and unequal matching in squares.} In Proposition \ref{prop:rawlsian}, we generalize the previous example to provide tight bounds quantifying the largest inequality and rank-inefficiency generated by stable and stable-dominating mechanisms.

\paragraph{The Persistency of Segregation.}A related concern is DA's tendency to produce segregated allocations, where marginalized students are concentrated in schools with predominantly low-achieving peers (as in Chile \citep{chileseg} and the UK \citep{terrier2021immediate}). This observation raises a second  research question: {can the efficiency improvements from any stable-dominating mechanism foster greater school diversity and create more opportunities for marginalized students than those under DA?}

Our analysis in Theorem \ref{thm:preservation} reveals fundamental limitations of stable-dominating mechanisms for reducing segregation. In settings where students can be classified as either advantaged or marginalized---with marginalized students having lower priority at every school---we prove that the demographic composition of every school is perfectly preserved under any stable-dominating mechanism. Students can only trade positions within their own group, never across the advantaged-marginalized divide. This means that any school that is fully segregated under DA---admitting exclusively either advantaged or marginalized students---remains segregated under any Pareto-efficient mechanism that dominates DA. More strikingly, even mixed schools maintain their exact proportion of advantaged and marginalized students, as advantaged students at such schools cannot be displaced through any efficiency-improving trades.

\section{Related Literature}
\label{sec:literature}

\paragraph{Stable-Dominating Mechanisms.} A growing literature has developed mechanisms that Pareto-dominate DA. The Efficiency-Adjusted Deferred Acceptance (EADA) mechanism \citep{kesten2010school} stands as the most prominent example, allowing controlled priority violations to achieve Pareto improvements. EADA satisfies several refined stability concepts including sticky stability \citep{afacan2017sticky}, essentially stability \citep{troyan2020essentially}, and weak-stability \citep{tang2021weak}, as well as weak strategy-proof properties such as non-obvious manipulability \citep{troyan2020obvious} and regret-free truth-telling \citep{regret}, demonstrating the theoretical sophistication possible within this framework.

Beyond EADA, this class includes the Top Trading Cycles mechanism using DA's allocation as endowment, the Maximum Improvement over DA (MIDA) mechanism \citep{knipe2025improvable}, and various approaches within the Student Exchange under Partial Fairness (SEPF) framework \citep{dur2019school}. The practical appeal of these mechanisms is evidenced by EADA's scheduled implementation in Flanders \citep{cerrone2022school}.

Our analysis examines whether these Pareto improvements can also address other distributional concerns about DA allocations, particularly regarding rank-efficiency, inequality, and segregation.

\paragraph{Rank-Efficiency.} The average rank under DA significantly exceeds (i.e. is worse) that obtained by the rank-minimizing mechanism, and this occurs in both random markets and a dataset from the Hungarian high school allocation \citep{ortega2023cost}. However, DA's average rank is roughly the same as that obtained by other Pareto-efficient mechanisms such as TTC and random serial dictatorship. \cite{pycia2019evaluating} provides a detailed overview of the literature and a theoretical explanation. Regarding stable-dominating mechanisms, \cite{diebold2017matching} and \cite{ortega2023cost} have examined the rank-efficiency on stable-dominating mechanisms in practice, that while better than DA's, remains far from that attained by the first best.

\paragraph{Inequality.} DA can produce highly unequal allocations, with the worst-off student receiving significantly worse placements than under Pareto-efficient alternatives \citep{galichon2023stable}. Stable-dominating mechanisms cannot improve the placement of unassigned or poorly matched students, and thus are limited in the extent to which they can reduce inequality \citep{tang2014new,alva2019stable,ortega2025}. Efficiency-adjustments over DA such as EADA can increase variance in the rank distribution, yet reducing the corresponding	 Gini index \citep{cerrone2022school}.
Nevertheless, our companion paper shows that almost all students could trade their DA placement to be assigned to a more desirable school in large random markets, so that even when inequality does not reduce, the welfare of most students improves under any efficient stable-dominating mechanism \citep{ortega2025}.

\paragraph{Segregation.} Empirical evidence from England and Chile shows DA can worsen segregation: in England, switching to DA harmed low-SES students who attended schools with lower value-added and more disadvantaged peers \citep{terrier2021immediate}; in Chile, DA increased within-school segregation in areas with high residential segregation or many private schools \citep{chileseg}. More broadly, school choice policies can significantly increase segregation, with every 3\% increase in choice availability affecting over 564,000 US children \citep{ukanwa2022school}. However, alternative mechanisms using minority reserves can reduce segregation without harming efficiency: simulations of Swedish school choice show that reserved seats reduce segregation while maintaining high shares of students assigned to preferred schools \citep{kessel2018school}, while in Chile such reserves would reduce segregation while improving rank-efficiency \citep{escobar2022segregation}. Theoretically, \cite{calsamiglia2023catchment} show how DA can limit disadvantaged students' access to good schools, albeit it being immune to a type of segregation that arises due to heterogeneous risk attitudes \citep{calsamiglia2021school}. 

\paragraph{Our Contribution.} While the literature establishes that DA suffers from inequality, rank-inefficiency, and segregation problems, our contribution is proving that stable-dominating mechanisms cannot eliminate these shortcomings. This highlights limitations of Pareto-efficiency reforms in addressing distributional concerns.

\section{Model}
\label{sec:model}

Following \cite{abdulkadirouglu2003}, a school choice problem $P$ consists of a finite set of students $I$ and a finite set of schools $S$. Each student $i$ has a strict preference $\succ_i$ over the schools. Each school $s$ has a quota of available seats $q_s$ and a strict priority over the students $\triangleright_s$, determined by local educational regulations. To allow for unassigned students, we allow the existence of a null school (denoted $s_\emptyset$), which has unlimited capacity. 
We use $m=|I|$ and $n=|S|$ to denote the number of students and schools, respectively. 
Let $\mathcal{P}_{m,n}$ denote the set of all school choice problems with $m$ students and $n$ schools.

For a given school choice problem $P$, a \textit{matching} $\mu$ is a mapping from $I$ to $S$ such that no school is matched to more students than its quota. We denote by $\mu_i$ the school to which student $i$ is assigned and by $\mu^{-1}_s$ the set of students assigned to school $s$. With this notation, a matching needs to satisfy $|\mu^{-1}_s|\leq q_s$ for every $s \in S$. We call every student $i$ with $\mu_i=s_\emptyset$ \emph{unassigned}. 

The function $\rk_i:S \rightarrow \{1, \ldots, n\}$ specifies the rank of school $s$ according to the preference profile $\succ_i$ of student $i$, i.e.
\begin{align*}
\rk_i(s) = |\{s' \in S: s' \succ_i s\}|+1,
\end{align*}
so that the most desirable option gets a rank of 1, whereas the least desirable gets a rank of $n$. With some abuse of notation, we use the same rank function to specify the students' rank as per the priority profile of schools, i.e. $\rk_s(i)= |\{j \in I: i \triangleright_s j\}|+1$.

A matching $\mu$ \textit{weakly Pareto-dominates} matching $\nu$ if, for every student $i \in I$, $\rk_i(\mu_i) \leq \rk_i (\nu_i)$. A matching $\mu$ \textit{Pareto-dominates} matching $\nu$ if $\mu$ weakly dominates $\nu$ and there exists a student $j \in I$ with $\rk_j(\mu_j) < \rk_j (\nu_j)$. A matching is \textit{Pareto-dominated} if there exists a matching that Pareto-dominates it and is \textit{Pareto-efficient} if it is not Pareto-dominated. 

A strictly stronger efficiency notion is that of rank-efficiency.
A matching $\mu$ is \emph{rank-efficient} if there does not exist an alternative matching $\nu$ such that $\sum_{i \in I} \rk_i(\nu_i) < \sum_{i \in I} \rk_i (\mu_i)$. Every rank-efficient matching is Pareto-efficient but the converse is not true. 

Student $i$ \emph{desires} school $s$ in matching $\mu$ if $\rk_i(s)< \rk_i(\mu_i)$. We say that student $j$ \emph{violates} student $i$'s priority at school $s$ in matching $\mu$ if $i$ desires $s$, $\mu_j =s$, and $\rk_s(i) < \rk_s(j)$. A matching $\mu$ is \emph{non-wasteful} if every school $s$ that is desired by some student in $\mu$ satisfies $|\mu^{-1}_s| =q_s$. A matching $\mu$ is \textit{stable} if it is non-wasteful and no student’s priority at any school is violated in $\mu$.

A \emph{mechanism} associates a matching to every school choice problem. For mechanism $M$, $M(P)$ denotes the resulting matching for school choice problem $P$. We are mainly interested in the \emph{student-proposing Deferred Acceptance (DA) mechanism} \citep{gale1962}, which works as follows:

\begin{enumerate}[leftmargin=2cm]
	\item[Round 1:] Every student applies to her most preferred school. Every school tentatively accepts  the best students according to its priority, up to its capacity, and rejects the rest. 
	\item[Round $k$:] Every student rejected in the previous round applies to her next best school. Among both new applicants and previously accepted students, every school accepts the best students according to its priority, up to its capacity, and rejects the rest.
\end{enumerate}

The procedure stops when there is a round without any new rejection. We use $\da(P)$ to denote the (unique) resulting matching generated by DA in school choice problem $P$.  $\da_i(P)$ denotes the school to which student $i$ is assigned. $\da^{-1}_s(P)$ denotes the set of students assigned to school $s$ in $\da(P)$.

A matching $\nu$ is stable-dominating if, for each student $i \in I$, $\rk_i(\nu_i)\leq \rk_i (\da_i(P))$. Note that the DA is stable-dominating. Stable-dominating matchings were first studied by \cite{alva2019stable}.
A mechanism $M$ is \emph{stable-dominating} if, for every school choice $P$, $M(P)$ is a stable-dominating matching. Stable-dominating mechanisms include EADA \citep{kesten2010school}, TTC using DA's allocation as endowment, and the Maximum Improvement over DA (MIDA) mechanism \citep{knipe2025improvable}. 

\subsection{Preliminary, Known Results}

A student is \emph{unimprovable} if he is assigned to the same school in every stable-dominating mechanism. The following lemmas will be useful for our analysis.

\begin{lemma}[\cite{tang2014new,alva2019stable}]\label{lemma:unim}
	If a student $i$ is either unassigned in DA or assigned to a school that did not reject any student during DA's execution, then student $i$ is unimprovable.
\end{lemma}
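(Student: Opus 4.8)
The plan is to reduce both cases of the lemma to a single structural fact about stable-dominating matchings: every school is occupied by exactly the same number of students under any stable-dominating matching $\nu$ as under $\da(P)$. Write $\mu = \da(P)$ throughout, and recall that stable-domination means $\rk_i(\nu_i) \le \rk_i(\mu_i)$ for every student $i$.

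First I would establish the occupancy-preservation fact. Since any student $k$ assigned to a real school in $\mu$ has $\mu_k \succ_k s_\emptyset$ (in DA a student reaches $s_\emptyset$ only after being rejected by every school she prefers to it), stable-domination forces $\nu_k \succeq_k \mu_k \succ_k s_\emptyset$, so the set of assigned students can only grow: $A := \{i : \mu_i \neq s_\emptyset\} \subseteq A' := \{i : \nu_i \neq s_\emptyset\}$. The key step is to show that no school can gain occupancy, i.e.\ $|\nu^{-1}_s| \le |\mu^{-1}_s|$ for every real school $s$. Suppose not; then $|\mu^{-1}_s| < |\nu^{-1}_s| \le q_s$, so $s$ is not full in $\mu$, and non-wastefulness of the stable matching $\mu$ implies no student desires $s$ in $\mu$, that is $\rk_k(\mu_k) \le \rk_k(s)$ for all $k$. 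Any student $k \in \nu^{-1}_s \setminus \mu^{-1}_s$ then satisfies both $\rk_k(s) \le \rk_k(\mu_k)$ (stable-domination) and $\rk_k(\mu_k) \le \rk_k(s)$ (non-wastefulness), forcing $\mu_k = s$ and contradicting $k \notin \mu^{-1}_s$. Summing $|\nu^{-1}_s| \le |\mu^{-1}_s|$ over all schools and comparing with $A \subseteq A'$ then pins down equality everywhere: $A = A'$ and $|\nu^{-1}_s| = |\mu^{-1}_s|$ for every $s$.

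With this in hand, the unassigned case is immediate: if $i$ is unassigned in $\mu$, then $i \notin A = A'$, so $\nu_i = s_\emptyset = \mu_i$ for every stable-dominating $\nu$, making $i$ unimprovable. For the second case, let $s = \mu_i$ be a school that rejected no one during DA's execution. Then exactly the students who ever applied to $s$ end up at $s$, so $\mu^{-1}_s$ is precisely the set of applicants to $s$. I would then argue $\nu^{-1}_s \subseteq \mu^{-1}_s$: any newcomer $k \in \nu^{-1}_s \setminus \mu^{-1}_s$ has $\nu_k = s \succ_k \mu_k$ and hence desires $s$ in $\mu$; but then in DA student $k$ must have applied to $s$ (she applies to every school she prefers to her eventual match $\mu_k$) and been rejected, contradicting that $s$ never rejects. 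Since $\nu^{-1}_s \subseteq \mu^{-1}_s$ and, by occupancy preservation, $|\nu^{-1}_s| = |\mu^{-1}_s|$, the two sets coincide, so $i \in \mu^{-1}_s$ forces $\nu_i = s = \mu_i$.

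I expect the main obstacle to be the occupancy-preservation step, and specifically getting the direction of the inequalities right: the argument hinges on combining the welfare inequality supplied by stable-domination with the \emph{opposite} inequality extracted from non-wastefulness of $\da(P)$ at a school assumed to be under-filled. Everything else---the applicant characterization of a never-rejecting school, and the fact that a DA participant applies to all schools she prefers to her final match---is a routine consequence of how DA processes proposals.
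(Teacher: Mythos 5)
The paper does not actually prove this lemma---it is stated as a known result imported from \cite{tang2014new} and \cite{alva2019stable}---so there is no in-paper argument to compare against. Your proof is correct and self-contained: the occupancy-preservation step is exactly the paper's Lemma~\ref{lemma:lema3} (which you could have invoked directly instead of rederiving, since the paper also states it as a known result), and both of your case arguments are sound. In particular, the two directional inequalities in the key step check out against the paper's definitions: stable-domination gives $\rk_k(\nu_k)\le\rk_k(\da_k(P))$, while non-wastefulness of $\da(P)$ at an under-filled school gives the reverse inequality for any would-be newcomer, forcing the contradiction; and the facts that a DA participant applies, in preference order, to every school she prefers to her final match, and that a never-rejecting school retains all of its applicants, are used correctly. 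This is essentially the standard argument from the cited literature, so there is nothing to flag beyond the observation that half of your write-up duplicates Lemma~\ref{lemma:lema3}.
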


A second known lemma that is useful for us states that the number of students assigned to a school is constant across stable-dominating matchings.

\begin{lemma}[\cite{tang2014new, alva2019stable}]\label{lemma:lema3}
	For any school choice problem $P$, any stable-dominating mechanism $M$ and any school $s \in S$, $\left\vert\da^{-1}_s(P)\right\vert=\left\vert M^{-1}_s(P)\right\vert$.
\end{lemma}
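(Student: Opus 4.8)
The plan is to establish the one-sided inequality $|M^{-1}_s(P)| \le |\da^{-1}_s(P)|$ for every school $s$, and then upgrade it to equality by a global counting argument. First I would record a basic structural fact: since $M(P)$ is stable-dominating and preferences are strict, every student either keeps her DA assignment or strictly improves, because $\rk_i(M_i(P)) \le \rk_i(\da_i(P))$ with equality forces $M_i(P)=\da_i(P)$, and otherwise $M_i(P) \succ_i \da_i(P)$. Consequently, the only students a school $s$ can \emph{gain} relative to DA are students $i$ with $M_i(P)=s \ne \da_i(P)$, and any such student desires $s$ in $\da(P)$, i.e. $\rk_i(s) < \rk_i(\da_i(P))$.

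Next I would bound each school from above via a dichotomy. Fix a school $s$. If $s$ gains no student, then $M^{-1}_s(P) \subseteq \da^{-1}_s(P)$ and the inequality is immediate. If $s$ gains some student, then $s$ is desired by that student in $\da(P)$; since DA is stable and hence non-wasteful, $s$ is filled to capacity under DA, so $|\da^{-1}_s(P)| = q_s$. As $M(P)$ is a matching it respects the quota, giving $|M^{-1}_s(P)| \le q_s = |\da^{-1}_s(P)|$. The one place this dichotomy needs care is the null school: because $s_\emptyset$ has unlimited capacity, the ``non-wasteful implies full'' step is vacuous there. I would handle $s_\emptyset$ separately by noting that DA is individually rational, so no student strictly prefers $s_\emptyset$ to her DA assignment; hence $s_\emptyset$ gains no student and falls into the subset case, again yielding $|M^{-1}_{s_\emptyset}(P)| \le |\da^{-1}_{s_\emptyset}(P)|$. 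Alternatively, Lemma \ref{lemma:unim} shows that any student unassigned in DA is unimprovable, which gives the same conclusion.

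Finally I would close the gap by conservation of students. Summing $|M^{-1}_s(P)| \le |\da^{-1}_s(P)|$ over all $s \in S$ gives $\sum_{s \in S} |M^{-1}_s(P)| \le \sum_{s \in S} |\da^{-1}_s(P)|$, but both sides equal $m = |I|$, since each matching assigns every student to exactly one school. A sum of nonnegative integers, each weakly below its counterpart, can meet the common total only if every term is in fact an equality; therefore $|M^{-1}_s(P)| = |\da^{-1}_s(P)|$ for every $s \in S$, as claimed.

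I expect the only genuine obstacle to be the bookkeeping around the null school and unassigned students, namely ensuring the capacity argument is not applied where capacity is infinite, and invoking individual rationality (or Lemma \ref{lemma:unim}) to rule out students ``moving down'' to $s_\emptyset$. Everything else follows directly from the non-wastefulness of DA together with the weak-domination definition of a stable-dominating matching.
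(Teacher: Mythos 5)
Your argument is correct. Note, however, that the paper itself gives no proof of this lemma: it is stated in the ``Preliminary, Known Results'' subsection and attributed to \cite{tang2014new} and \cite{alva2019stable}, and your proof is essentially the standard argument from those references --- the one-sided bound $\left\vert M^{-1}_s(P)\right\vert\leq\left\vert\da^{-1}_s(P)\right\vert$ via non-wastefulness of the stable matching, upgraded to equality by conservation of the total number of students. The only blemish is the parenthetical ``alternatively'' for the null school: Lemma \ref{lemma:unim} yields $\da^{-1}_{s_\emptyset}(P)\subseteq M^{-1}_{s_\emptyset}(P)$, which is the containment in the \emph{opposite} direction from the one you need at that point, so the individual-rationality route you give first is the one that actually does the work; since that route is valid, the proof stands.
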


The third lemma connects unimprovable students with cycles in the envy digraph induced by the Deferred Acceptance (DA) mechanism—a novel concept introduced in our companion paper \citep{ortega2025}. In this digraph, denoted $G^\da(P)$, each student is a node with directed edges to other students whose assignments they prefer to their own. The envy digraph for our motivating example appears below; note the absence of cycles, which reflects DA’s Pareto-efficiency in this case.

\begin{center}
	\begin{tikzpicture}[scale=0.8, ->, node distance=.5cm, every node/.style={draw, circle, minimum size=1cm}, thick]
		\foreach \i in {1,2,...,6} {
			\node (N\i) at ({90-60*(\i-1)}:3) {$i_\i$};
		}
		
		
		
		\draw[thick] (N2) -- (N1);
		\draw[thick] (N3) -- (N1);
		\draw[thick] (N3) -- (N2);
		\draw[thick] (N4) -- (N1);
		\draw[thick] (N4) -- (N2);
		\draw[thick] (N4) -- (N3);
		\draw[thick] (N5) -- (N1);
		\draw[thick] (N5) -- (N2);
		\draw[thick] (N5) -- (N3);
		\draw[thick] (N5) -- (N4);
		\draw[thick] (N6) -- (N1);
		\draw[thick] (N6) -- (N2);
		\draw[thick] (N6) -- (N3);
		\draw[thick] (N6) -- (N4);
		\draw[thick] (N6) -- (N5);		
	\end{tikzpicture}
\end{center}

The next lemma relates DA's envy digraph and unimprovable students.

\begin{lemma}[\cite{kesten2010school,erdil2014strategy,tang2014new,ortega2025}]
		\label{lemma:cycles}
	A student is unimprovable if and only if he does not belong to any cycles in $G^\da(P)$.
\end{lemma}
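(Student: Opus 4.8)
The plan is to prove the two contrapositive implications separately: (a) every student lying on a cycle of $G^\da(P)$ is improvable, and (b) every improvable student lies on some cycle. Throughout I use that strict preferences make $\rk_i$ injective on schools, so any student assigned a different school under a stable-dominating matching $\nu$ than under $\da$ is \emph{strictly} better off, i.e. $\rk_i(\nu_i) < \rk_i(\da_i(P))$.

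For (a), suppose $i_1 \to i_2 \to \cdots \to i_k \to i_1$ is a cycle in $G^\da(P)$, so $i_j$ desires $\da_{i_{j+1}}(P)$ (indices mod $k$). Define $\nu$ by rotating assignments around the cycle, $\nu_{i_j} = \da_{i_{j+1}}(P)$, and $\nu_\ell = \da_\ell(P)$ otherwise. Rotating the assignments leaves the number of students at each school unchanged, so $\nu$ is feasible; every cycle member strictly improves and no one else changes, so $\nu$ is stable-dominating. A mechanism returning $\nu$ on $P$ (and $\da$ on every other problem) is then a stable-dominating mechanism assigning $i_1$ a school other than $\da_{i_1}(P)$, witnessing that each $i_j$ is improvable.

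For (b), let $i$ be improvable and fix a stable-dominating matching $\nu$ with $\nu_i \neq \da_i(P)$. Let $D = \{\ell : \nu_\ell \neq \da_\ell(P)\}$ be the set of movers, so $i \in D$ and every $\ell \in D$ strictly improves. For each school $s$, put $A_s = \{\ell \in D : \da_\ell(P) = s\}$ and $B_s = \{\ell \in D : \nu_\ell = s\}$, the movers leaving and arriving at $s$. Since non-movers occupy exactly the same seats at $s$ under both matchings and $|\da^{-1}_s(P)| = |\nu^{-1}_s(P)|$ by Lemma \ref{lemma:lema3}, I obtain $|A_s| = |B_s|$ for every $s$. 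I then pick, for each $s$, a bijection $\pi_s : B_s \to A_s$ and glue them into a single map $f : D \to D$ by $f(\ell) = \pi_{\nu_\ell}(\ell)$, which is well defined because $\ell \in B_{\nu_\ell}$. As $\{A_s\}_s$ and $\{B_s\}_s$ each partition $D$ and each $\pi_s$ is a bijection, $f$ is a permutation of the finite set $D$. By construction $\da_{f(\ell)}(P) = \nu_\ell$, and since $\ell$ strictly prefers $\nu_\ell$ to $\da_\ell(P)$, student $\ell$ desires $\da_{f(\ell)}(P)$; hence $\ell \to f(\ell)$ is an edge of $G^\da(P)$. A permutation of a finite set decomposes into disjoint cycles, each of which is now a cycle of $G^\da(P)$, and the cycle containing $i$ shows that $i$ lies on a cycle.

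I expect the main obstacle to be part (b), specifically the passage from ``some movers form a cycle'' to ``the particular student $i$ lies on a cycle.'' A naive chain that repeatedly sends a mover to a student vacating its new seat need not return to $i$ once schools have capacity above one, because several students may arrive at the same school. Establishing $|A_s| = |B_s|$ via Lemma \ref{lemma:lema3} and assembling the per-school bijections into a global permutation $f$ is what forces every mover---$i$ included---onto a genuine cycle; the feasibility and stable-dominance checks in part (a) are comparatively routine.
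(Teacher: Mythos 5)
The paper does not actually prove this lemma---it is stated in the ``Preliminary, Known Results'' section and attributed to the cited literature---so there is no in-paper proof to compare against. Your argument is correct and self-contained, and it follows the standard route from those references: direction (a) rotates assignments along a cycle to build a stable-dominating matching (and hence a mechanism) that moves every cycle member, and direction (b) decomposes the set of movers into disjoint trading cycles. The one step that genuinely needs care with quotas above one---that the movers leaving and arriving at each school are equinumerous, so the per-school bijections glue into a fixed-point-free permutation of $D$---you handle correctly via Lemma~\ref{lemma:lema3}, noting that non-movers occupy the same seats under both matchings; and since strict preferences make every mover a strict gainer, each edge $\ell \to f(\ell)$ is indeed present in $G^{\da}(P)$. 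No gaps.
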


\section{Results}
\label{sec:thegap}

\subsection{Bounds}

\paragraph{Rawlsian Inequality.} We assess a mechanism's inequality by comparing the maximum rank assigned to any student (i.e., the rank of the worst-off student) against the minimum possible maximum rank achievable. This minimum is attained by the \emph{Rawlsian mechanism} ($\rw$), which selects a matching that minimizes the maximum rank for every school choice problem $P$ \citep{demeulemeester2022rawlsian,afacan2024rawlsian,kuvalekar2024fair}. This comparison reflects a societal welfare perspective focused on the worst-off individual, following \cite{rawls1971}. Although the Rawlsian matching $\rw(P)$ is not always unique, this potential multiplicity and its resolution are irrelevant for our analysis.

We define the \emph{Rawlsian inequality} (\poi) of a mechanism $M$ for a school choice problem with $n$ schools as the following ratio:\footnote{This concept is related to the \emph{price of anarchy} and \emph{price of stability} concepts in algorithmic game theory \citep{koutsoupias1999worst, anshelevich2008price}. A similar concept was proposed by \cite{boudreau2013preferences} to measure the efficiency loss of stable matching mechanisms.}
\begin{equation}
	\poi(M;n) := \sup_{m \in \mathbb{N}, P \in \mathcal{P}_{m,n}}\left[ \frac{ \max_{i \in I} \rk_i [M_i(P)]}{\max_{i \in I} \rk_i[\rw_i(P)]} \right]
\end{equation}

\paragraph{Rank-Inefficiency.} Analogously, the Rank-Minimizing mechanism ($\rmm$) selects a matching minimizing the average rank for every $P$ \citep{featherstone2020rank}. We define the rank-inefficiency ($\poik$) of $M$ as:
\begin{equation}
	\poik(M;n) := \sup_{m \in \mathbb{N},\ P \in \mathcal{P}_{m,n}} \left[ \frac{ \sum_{i \in I} \rk_i [M_i(P)] }{ \sum_{i \in I} \rk_i[\rmm_i(P)] } \right]
\end{equation}

These definitions enable us to characterize the Rawlsian inequality and rank-inefficiency of stable-dominating mechanisms, demonstrating that they may assign the worst-off student and the average student to schools ranked $\frac{n}{2}$ times worse than the respective first-best mechanisms.

\begin{proposition} \label{prop:rawlsian}
	For any mechanism $M^*$, any stable-dominating mechanism $M$ and all $n \in \mathbb{N}$:
	\begin{equation*}
		\poi(M;n) = \frac{n}{2} \geq	\poi(M^*;n)  .
	\end{equation*}
	Similarly, for any Pareto-efficient mechanism $M^\dagger$, any stable-dominating \emph{and} Pareto-efficient mechanism $M'$ and all $n \in \mathbb{N}$:
	\begin{equation*}
		\poik(M';n) = \frac{n}{2} \geq 	\poik(M^\dagger;n).
	\end{equation*}
\end{proposition}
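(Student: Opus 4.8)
The plan is to prove each equality by a matching pair of bounds, using a single family of instances for both lower bounds and separating the Rawlsian (maximum-rank) and rank-efficiency (sum-of-ranks) arguments for the upper bounds. For the lower bounds I would generalize Table~\ref{tab:example1} to $n$ students and $n$ schools: let $i_1$ rank $s_1\succ s_n\succ\cdots$ and, for $k\ge 2$, let $i_k$ rank $s_{k-1}\succ s_1\succ\cdots\succ s_{k-2}\succ s_k\succ\cdots$, with priorities chosen exactly as in the table so that $\da$ returns the diagonal $\da_{i_k}(P)=s_k$, i.e.\ $\rk_{i_k}(\da_{i_k}(P))=k$. Each $i_k$ then envies precisely $i_1,\dots,i_{k-1}$, so $G^{\da}(P)$ is a transitive tournament with no cycle; by Lemma~\ref{lemma:cycles} every student is unimprovable, $\da$ is Pareto-efficient, and hence every stable-dominating mechanism (a fortiori every Pareto-efficient stable-dominating one) returns $\da$ on this instance. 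Thus $\max_i\rk_i(\da_i(P))=n$ and $\sum_i\rk_i(\da_i(P))=n(n+1)/2$. The rotation $\nu$ with $\nu_{i_1}=s_n$ and $\nu_{i_k}=s_{k-1}$ for $k\ge 2$ gives rank $1$ to everyone except $i_1$ (rank $2$), so $\max_i\rk_i(\nu_i)=2$ and $\sum_i\rk_i(\nu_i)=n+1$; since $i_1,i_2$ share the top choice $s_1$, no matching gives all students rank $1$, so the Rawlsian value equals $2$ and the rank-minimizing value equals $n+1$. Hence $\poi(M;n)\ge n/2$ and $\poik(M';n)\ge n/2$.

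\textbf{Upper bound for $\poi$.} Write $R^*(P)=\max_i\rk_i(\rw_i(P))$ and split on its value. If $R^*(P)\ge 2$ then, because every rank lies in $\{1,\dots,n\}$, $\max_i\rk_i(M_i(P))\le n\le (n/2)\,R^*(P)$, so the ratio is at most $n/2$. If $R^*(P)=1$ then all students can simultaneously receive their first choice; this matching has no blocking pair and is student-optimal, so $\da$ gives everyone rank $1$, and any stable-dominating $M$---weakly Pareto-dominating $\da$---does the same, giving ratio $1$; for a Pareto-efficient $M^*$ the all-first-choice matching Pareto-dominates every alternative, so $M^*$ coincides with it and the ratio is again $1$. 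Thus $\poi(M;n)\le n/2$ and $n/2\ge\poi(M^*;n)$ for every Pareto-efficient $M^*$, which with the lower bound gives $\poi(M;n)=n/2$.

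\textbf{Upper bound for $\poik$.} Here I would invoke the classical fact that every Pareto-efficient matching is the outcome of serial dictatorship under some ordering $\sigma$ of the students. When $\sigma(k)$ chooses, at most $k-1$ schools carry any occupied seat and every school she strictly prefers to her assignment is full, so $\rk_{\sigma(k)}(M^\dagger_{\sigma(k)}(P))\le k$ and $\sum_i\rk_i(M^\dagger_i(P))\le\sum_{k=1}^{\min(m,n)}k$. If all students can get rank $1$ the same Pareto-domination argument forces $M^\dagger$ to do so and the ratio is $1$; otherwise at least one student has rank $\ge 2$ in every matching, so $\sum_i\rk_i(\rmm_i(P))\ge (m-1)+2=m+1$. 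For $m\le n-1$ this yields ratio $\le \frac{m(m+1)/2}{m}=\frac{m+1}{2}\le\frac n2$, and for $m=n$ it yields $\le\frac{n(n+1)/2}{n+1}=\frac n2$; instances with $m>n$ only add a common block of unassigned students to both sums, driving the ratio toward $1$, so the supremum is attained at $m=n$. Hence $\poik(M^\dagger;n)\le n/2$ for every Pareto-efficient $M^\dagger$, giving $\poik(M';n)=n/2\ge\poik(M^\dagger;n)$.

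\textbf{Main obstacle.} The binding step is the rank-efficiency upper bound: the crude estimates give only $(m+1)/2\approx(n+1)/2$, and pinning down the exact constant $n/2$ requires pairing the \emph{tight} serial-dictatorship bound $\sum_i\rk_i\le n(n+1)/2$ with the matching lower bound $\sum_i\rk_i(\rmm_i)\ge n+1$ precisely in the case $m=n$, and then verifying that larger $m$ (where null-school ranks enter both numerator and denominator) and non-unit capacities cannot lift the ratio above $n/2$.
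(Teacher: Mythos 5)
Your proposal follows essentially the same route as the paper's proof: the identical diagonal/rotation family for the lower bounds, the same dichotomy on whether the Rawlsian optimum equals $1$ for the $\poi$ upper bound, and the same serial-dictatorship bound $1+2+\cdots+n$ against the floor $n+1$ on $\sum_i\rk_i(\rmm_i(P))$ for the $\poik$ upper bound. The additional care you devote to $m\neq n$ goes beyond the paper (which implicitly treats only $m=n$ with unit capacities); just note that your intermediate bound $\sum_i\rk_i\bigl(M^\dagger_i(P)\bigr)\le\sum_{k=1}^{\min(m,n)}k$ should read $\sum_{k=1}^{m}\min(k,n)$ when $m>n$, and that the non-unit-capacity case you flag as the ``main obstacle'' is likewise left untreated in the paper's own argument.
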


\begin{proof}
	First we show that the $n/2$ ratio for Rawlsian inequality and rank-inefficiency is attainable by a stable-dominating mechanism. The school choice problem in the motivating example shows that the $n/2$ ratio can be achieved by the unique stable matching when $n=m=6$, and it can be generalized for any other $n=m$  values with the following construction. Let students' preferences be as follows:
	\begin{itemize}
		\item for $i_1$: $\rk_{i_1}(s_1)=1$ and $\rk_{i_1}(s_n)=2$.
		
		\item for any other $i_k$: $\rk_{i_k}(s_{k-1})=1$, $\rk_{i_k}(s_{k})=k$ and for any other schools with index $j \in [1:k-2]$, $\rk_{i_k}(s_j)=j+1$. 
	\end{itemize}
	
	Additionally, schools' priorities are so that for any $s_k$ and any two students $i_\ell, i_f$ with $k \leq \ell < f \leq n$, $\rk_{s_k}(i_\ell)< \rk_{s_k}(i_f)$.\footnote{Although we have not specified preferences and priorities fully, the given partial order will suffice for our purposes because any unspecified preferences or priorities will be irrelevant.}
	
	In DA's allocation, student $i_k$ is matched with $s_k$, for all $1 \leq k \leq n$, because $s_1$ is the best school for student $i_1$ and vice versa, and by stability of DA they must be matched together. Among the remaining schools and students, $i_2$ and $s_2$ rank each other as best, and must be matched, and so on. However, DA's allocation is Pareto-efficient and thus coincides with any stable-dominating mechanism. In any such mechanism $M$, $\rk_{i_n}[M_{i_n}(P)]=n$, even though the alternative Pareto-efficient matching $\mu_{i_j}=s_{j-1}$ (modulo $n$) is such that $\max_{i \in I} \rk_i(\mu_i)=2$, thus achieving the $n/2$ ratio for Rawlsian inequality. The same ratio is obtained for the rank inefficiency of the same example, as the sum of ranks in DA is simply $\sum_{i=1}^n i=\frac{n(n+1)}{2}$, whereas the alternative efficient allocation where $\mu_{i_j} = s_{j-1}$ (modulo $n$) achieves the following ranks: student $i_1$ receives rank 2 (assigned to $s_n$, his second choice), and students $i_2$ through $i_n$ each receive rank 1 (each assigned to their top choice). This yields a total rank sum of $2 + (n-1) = n + 1$, leading to the ratio $\frac{n(n+1)/2}{n+1} = \frac{n}{2}$.

	Now we show that the $n/2$ is a tight upper bound and that $\frac{n}{2}\geq \poi(M^*;n)$ for any Pareto-efficient mechanism $M^*$. For Rawlsian inequality: the maximum rank for a student is $n$, so the numerator cannot exceed $n$. The denominator can only be smaller than $2$ if $\max_{i \in I} \rk_i[\rw_i(P)]=1$. However, if such an allocation were possible, DA would stop at the first proposal round, assigning every student to their top school, in which case $\rw(P)=\da(P)$, leading to a Rawlsian inefficiency ratio of 1.
	
	For rank-inefficiency: any Pareto-efficient matching $M^\dagger$ can be obtained as a serial dictatorship, and in any serial dictatorship the $k$-th dictator cannot be assigned to a school with rank larger than $k$. Thus, in any serial dictatorship, the maximum rank is bounded from above by $\frac{(n+1)n}{2}$. The sum of ranks in $\rmm(P)$ cannot be any smaller than $n+1$, as otherwise $M^\dagger(P)=\rmm(P)$, and $\frac{(n+1)n}{2(n+1)}=\frac{n}{2}$ is a tight upper bound for the rank inefficiency of Pareto-efficient mechanisms, including those that are also stable-dominating.
\end{proof}

\subsection{Marginalized Students and Segregation}
\label{sec:marginalized}

Now we turn to the question of how stable-dominating mechanisms may improve (or not) the placement of marginalized students in society. Marginalized students can be thought of as those from low-income households, immigrants, ethnic, racial, or religious minorities, students with disabilities, and others facing systemic disadvantages that are engraved in schools' priorities. With this in mind, we are also interested in whether mechanisms that Pareto-dominate DA can alter the segregation in school compositions. Several studies have shown that DA generates high levels of school segregation \cite{terrier2021immediate, chileseg}, and therefore, understanding whether any stable-dominating mechanism can reduce the segregation created by DA is potentially interesting for policy questions.

We partition the set of students and schools into two non-empty groups. Some students will be \emph{advantaged} and some will be \emph{marginalized}. A marginalized student has a lower priority than every advantaged student at every school. The set of advantaged and marginalized students are denoted by $Y \subsetneq I$ and $Z \subsetneq I$, respectively, with $Y \cup Z = I$ and $Y \cap Z = \emptyset$. 
In this setup, we obtain the following result.
\begin{proposition}
	\label{prop:new}
	There is at least one marginalized student whose DA assignment does not improve under any stable-dominating mechanism.
\end{proposition}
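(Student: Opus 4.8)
The plan is to exhibit a single marginalized student who, at the end of DA, is assigned to a school that rejected nobody during the algorithm's execution; Lemma~\ref{lemma:unim} then immediately delivers that this student is unimprovable, which is exactly the claim. The natural candidate is the \emph{last marginalized student to be rejected} during DA.

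First I would dispose of a degenerate case. If no marginalized student is ever rejected, then every marginalized student is tentatively held by, and hence finally assigned to, her top choice. Any such student $z$ satisfies $\rk_z(\da_z(P))=1$, so the stable-dominating inequality $\rk_z(\nu_z)\le\rk_z(\da_z(P))=1$ forces $\nu_z=\da_z(P)$ for every stable-dominating matching $\nu$; thus $z$ is unimprovable and we are done. Otherwise, let $z^\ast$ be the last marginalized student rejected during DA, and let $s'=\da_{z^\ast}(P)$ be the school accepting $z^\ast$ after this final rejection. Because $z^\ast$'s rejection is the last one suffered by \emph{any} marginalized student, $z^\ast$ is never rejected again, so $s'$ is genuinely her terminal assignment; and if $z^\ast$ exhausts her list then $s'=s_\emptyset$, so she is unassigned and the conclusion follows directly from Lemma~\ref{lemma:unim}.

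The heart of the argument is to show that $s'$ rejects nobody during the entire run of DA. I would split this into the periods before and after $z^\ast$ joins $s'$, leaning on two structural facts: in student-proposing DA a school that ever reaches capacity stays full thereafter (its held set only improves), and---crucially---every student with lower priority than the marginalized $z^\ast$ at $s'$ is itself marginalized, since all advantaged students outrank all marginalized ones at every school. For the \emph{before} period: were $s'$ already full when $z^\ast$ proposes, retaining $z^\ast$ would require ejecting the lowest-priority incumbent, who is ranked below the retained $z^\ast$ and is therefore marginalized; this rejection occurs after $z^\ast$'s own, contradicting the maximality defining $z^\ast$. For the \emph{after} period: once $z^\ast$ is held by $s'$, any subsequent rejection by $s'$ would eject a student ranked below the retained $z^\ast$, again necessarily marginalized and again strictly later than $z^\ast$'s rejection---the same contradiction. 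Hence $s'$ rejects no one, and Lemma~\ref{lemma:unim} shows $z^\ast$ is unimprovable.

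The main obstacle is precisely this structural claim that $s'$ never rejects anyone. The leverage comes entirely from combining DA's monotonicity with the priority gap between the two groups: the gap guarantees that any student displaced from a school retaining the marginalized $z^\ast$ must itself be marginalized, so any such displacement would furnish a marginalized rejection no earlier than the one defining $z^\ast$. Once this is pinned down the conclusion is immediate; notably, no appeal to the envy digraph or to Lemma~\ref{lemma:cycles} is required for this proposition.
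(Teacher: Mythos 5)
Your proof is correct and follows essentially the same route as the paper's: both apply Lemma~\ref{lemma:unim} to a school that never rejects anyone, identified via an extremal choice (you take the last rejection suffered by a marginalized student, the paper takes the last round in which a marginalized student applies), and both derive the contradiction from the observation that any student displaced from a school retaining a marginalized student must itself be marginalized and rejected ``too late.'' If anything, your explicit treatment of the period after $z^\ast$ joins $s'$ is slightly more complete than the paper's, which only rules out rejections at rounds $t'\leq t$.
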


\begin{proof}
Let $t$ be the last round of DA's execution where a marginalized student applies to a school. Such student $i$ is unimprovable if he is unmatched in DA by Lemma \ref{lemma:unim}. Thus, assume instead that $i$ is assigned to school $s$. 
Let $O^t(s)$ denote the set of student proposals that school $s$ receives during round $t$, and $q_s^{t-1}$ the number of students accepted by school $s$ at time $t-1$. 
If $|O^t(s)| +   q_s^{t-1} > q_s$, then school $s$ rejects at least one student at time $t$. Since $i$ is accepted and is a marginalized student, the student who was rejected at time $t$ was himself marginalized.
This rejected marginalized student either (1) applies to some school at a round $t'>t$, contradicting our definition of $t$, or (2) becomes unmatched, in which case he is unimprovable by Lemma \ref{lemma:unim}.
Therefore, $|O^t(s)| +   q_s^{t-1} \leq q_s$. Then school $s$ does not reject any student at time $t'\leq t$. But then $i$ is unimprovable by Lemma \ref{lemma:unim}.
\end{proof}

The existence of at least one unimprovable marginalized student raises a natural question: can marginalized students improve their assignments by trading with advantaged students? The following result shows this is impossible due to the priority structure.

\begin{proposition}
	\label{prop:nomixing}
	No trading cycle in $G^\da$ includes both advantaged and marginalized students.
\end{proposition}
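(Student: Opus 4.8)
The plan is to prove the stronger fact that $G^\da(P)$ contains no edge directed from an advantaged student to a marginalized student, from which the statement about cycles follows immediately. If some directed cycle contained students from both $Y$ and $Z$, then walking around the cycle one would have to pass from the advantaged group into the marginalized group at least once, yielding a consecutive pair $y \to z$ with $y \in Y$ and $z \in Z$. Hence it suffices to forbid every such edge.

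First I would unfold the definition of the edge $y \to z$: it means that $y$ strictly prefers $z$'s assignment to his own, i.e. $y$ \emph{desires} the school $s := \da_z(P)$ in the matching $\da(P)$. The decisive step is to bring in the priority asymmetry: because $y$ is advantaged and $z$ is marginalized, $z$ has strictly lower priority than $y$ at \emph{every} school, in particular $\rk_s(y) < \rk_s(z)$. At this point $y$ desires $s$, $z$ is assigned to $s$, and $y$ out-ranks $z$ at $s$ --- precisely the configuration in which $z$ violates $y$'s priority at $s$. Since $\da(P)$ is stable and therefore admits no priority violations, the edge $y \to z$ cannot exist, and the proof is complete.

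The one point needing care --- and the main obstacle --- is that the stability argument is only meaningful at a genuine, finite-capacity school, so I must rule out the degenerate case where $z$ is unassigned, i.e. $s = s_\emptyset$. Here I would invoke individual rationality of DA: each student weakly prefers his own DA assignment to being unassigned, so no student ever envies an unassigned student. Consequently, whenever the edge $y \to z$ is present, $s$ is a real school and the stability contradiction goes through. Apart from this corner case, the argument is simply a direct collision between the advantaged student's envy of a marginalized student and DA's no-priority-violation guarantee, which the asymmetric priority assumption turns into an outright contradiction.
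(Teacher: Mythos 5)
Your proof is correct and follows essentially the same route as the paper's: extract an edge from an advantaged student to a marginalized one within the mixed cycle, and use the universal priority asymmetry to contradict the stability of $\da(P)$. Your version is in fact slightly more careful than the paper's, since you justify why such an edge must exist (rather than asserting it without loss of generality) and you explicitly dispose of the case where the envied marginalized student is assigned to $s_\emptyset$.
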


\begin{proof}
	Suppose, for contradiction, that there exists a cycle containing both marginalized and advantaged students. Without loss of generality, there exists $i_2 \to i_1$ as part of such a cycle where $i_1$ is marginalized and $i_2$ is advantaged.
The edge $i_2 \to i_1$ means that advantaged student $i_2$ envies marginalized student $i_1$'s assignment, i.e., $i_2$ prefers school $s = \da_{i_1}(P)$ to their own assignment $\da_{i_2}(P)$.
However, by the definition of marginalized students, $i_1$ has lower priority than $i_2$ at every school, including school $s$. Then, $(i_2,s)$ is a blocking pair of $\da(P)$, contradicting stability.
Therefore, cycles in $G^{\da}(P)$ cannot contain both marginalized and advantaged students.
\end{proof}

Proposition \ref{prop:nomixing} reveals that improvement cycles respect the advantaged-marginalized divide—students can only trade within their own group. This segregation in trading possibilities has immediate implications for school compositions. To formalize these implications, we introduce the following definition.
We say that a school $s$ is \emph{fully segregated} under matching $\mu$ if it only accepts either advantaged or marginalized students, i.e. if $\mu^{-1}_s\subseteq Y$ or $\mu^{-1}_s\subseteq Z$. While this binary measure captures only the most extreme form of segregation, our results show that stable-dominating mechanisms cannot eliminate even these most severe cases of separation.

\begin{proposition}
	\label{prop:advantagedunim}
	Every advantaged student assigned to a school that is not fully segregated is unimprovable.
\end{proposition}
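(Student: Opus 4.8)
The plan is to reduce unimprovability to the cycle structure of the envy digraph and then rule out any cycle through our student. By Lemma \ref{lemma:cycles}, a student is unimprovable if and only if he lies on no cycle of $G^{\da}(P)$. So, writing $i$ for the advantaged student in question and $s = \da_i(P)$ for the not-fully-segregated school to which he is assigned, it suffices to show that $i$ belongs to no cycle of $G^{\da}(P)$.

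First I would suppose, toward a contradiction, that $i$ lies on a cycle $C$ of $G^{\da}(P)$. Since $i \in Y$ is advantaged and, by Proposition \ref{prop:nomixing}, no cycle of $G^{\da}(P)$ contains both advantaged and marginalized students, every vertex of $C$ must be advantaged. In particular, letting $i'$ be the vertex immediately preceding $i$ along $C$ (so that $C$ contains the edge $i' \to i$), the student $i'$ is advantaged. By the meaning of edges in $G^{\da}(P)$, the edge $i' \to i$ says that $i'$ prefers $\da_i(P) = s$ to his own assignment $\da_{i'}(P)$; equivalently, $i'$ desires $s$.

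Next I would bring in the hypothesis that $s$ is not fully segregated: since the advantaged student $i$ is assigned to $s$, non-full-segregation forces the existence of a marginalized student $j \in Z$ with $\da_j(P) = s$. I then combine this with the stability of $\da(P)$. Because $i'$ desires $s$ and $\da(P)$ is stable, $i'$'s priority at $s$ is not violated, so every student assigned to $s$—in particular $j$—has strictly higher priority at $s$ than $i'$, i.e. $\rk_s(j) < \rk_s(i')$. But $i'$ is advantaged and $j$ is marginalized, and by assumption every marginalized student has strictly lower priority than every advantaged student at every school, giving $\rk_s(i') < \rk_s(j)$. This is the desired contradiction, so $i$ lies on no cycle and is therefore unimprovable by Lemma \ref{lemma:cycles}.

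The routine bookkeeping I would still check is minor: $i' \neq j$ holds automatically, since $i'$ desires $s$ and hence is not assigned to $s$, whereas $j$ is; and the degenerate case $s = s_\emptyset$ can be disposed of separately, as there $i$ is unassigned and thus unimprovable directly by Lemma \ref{lemma:unim}. The genuine content—and the step I expect to do the real work—is the interplay between Proposition \ref{prop:nomixing}, which forces the incoming neighbor $i'$ to be advantaged, and the priority reversal that stability imposes at the mixed school $s$; it is precisely the presence of a lower-priority marginalized student at $s$ that makes the envy of an advantaged $i'$ incompatible with stability.
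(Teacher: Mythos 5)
Your proof is correct and follows the same overall skeleton as the paper's: reduce unimprovability to the absence of cycles via Lemma \ref{lemma:cycles}, use Proposition \ref{prop:nomixing} to force the in-neighbour $i'$ on any putative cycle to be advantaged, and then derive a contradiction from the presence of a marginalized student $j$ at the mixed school $s$. The one place you diverge is in how that final contradiction is obtained. The paper argues through DA's execution history: since $j$ survives at $s$ and has lower priority than every advantaged student, no advantaged student outside $s$ can ever have applied to $s$, hence $i'$ cannot prefer $s$ to his own assignment. You instead invoke stability of $\da(P)$ directly: $i'$ desires $s$, $j$ is assigned to $s$, and $\rk_s(i') < \rk_s(j)$, so $j$ would violate $i'$'s priority at $s$ --- a blocking pair. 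Both are valid; your version is arguably the cleaner of the two since it uses only the stability of the DA outcome (a static property) rather than reasoning about the run of the algorithm, and it exactly mirrors the paper's own argument in the proof of Proposition \ref{prop:nomixing}. Your housekeeping remarks ($i' \neq j$ because $i'$ desires $s$ and so is not assigned there; the $s = s_\emptyset$ case handled by Lemma \ref{lemma:unim}) are fine and, if anything, slightly more careful than the paper's write-up.
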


\begin{proof}
	Let $i$ be an advantaged student assigned to school $s$ under DA, where school $s$ also admits at least one marginalized student $j$.
	
	Suppose, for contradiction, that $i$ belongs to a cycle in $G^\da(P)$. By Proposition \ref{prop:nomixing}, this cycle contains only advantaged students. Therefore, there must exist an advantaged student $i'$ who envies $i$'s placement at school $s$.
	
	However, this is impossible. Since $s$ admits marginalized student $j$, and marginalized students have lower priority than all advantaged students, any advantaged student who applied to $s$ would have been accepted over $j$. The fact that $j$ is at $s$ means that no advantaged student other than those already at $s$ ever applied there.
	
	In particular, advantaged student $i'$ never applied to $s$ during DA's execution, which means $i'$ does not prefer $s$ to their current assignment. This contradicts our assumption that $i'$ envies $i$'s placement at $s$.
	
	Therefore, no advantaged student envies $i$, so $i$ cannot be part of any cycle and is unimprovable.
\end{proof}

We have established three key facts: some marginalized students are unimprovable (Proposition \ref{prop:new}), no trades occur across groups (Proposition \ref{prop:nomixing}), and advantaged students at mixed schools cannot improve (Proposition \ref{prop:advantagedunim}). These results combine to yield our main segregation Theorem.

\begin{theorem}[The Preservation of Segregation under Stable-Dominating Mechanisms]
	\label{thm:preservation}
	The number of advantaged and marginalized students accepted to each school under any stable-dominating mechanism is constant. Thus, if a school is fully segregated under DA, it remains fully segregated under any stable-dominating mechanism.
\end{theorem}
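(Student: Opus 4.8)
The plan is to prove the theorem as a direct corollary of the three preceding propositions, combined with Lemma~\ref{lemma:lema3}. The key observation is that Lemma~\ref{lemma:lema3} already guarantees that the \emph{total} number of students at each school is constant across stable-dominating matchings; it therefore suffices to show that the split between advantaged and marginalized students at each school is also invariant. Since no student ever leaves a school without another student taking their seat (the headcount is fixed), invariance of the count within each group is equivalent to showing that every seat vacated by an advantaged student is refilled by an advantaged student, and likewise for marginalized students.

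First I would recall the characterization from Lemma~\ref{lemma:cycles}: a student changes assignment under some stable-dominating mechanism only if he lies on a cycle in the envy digraph $G^\da(P)$, and any such reassignment is realized by rotating students along cycles. Next I would invoke Proposition~\ref{prop:nomixing}, which says that no cycle in $G^\da(P)$ contains both an advantaged and a marginalized student. This is the crucial structural fact: every trading cycle is homogeneous in group membership. Consequently, when students are permuted along a cycle to produce a stable-dominating matching, each vacated seat is filled by a student of the same group as its previous occupant. Hence the number of advantaged students and the number of marginalized students at each school $s$ is preserved, for every stable-dominating mechanism.

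For the second sentence of the theorem, I would argue that full segregation is an immediate consequence of this count-preservation. If school $s$ is fully segregated under DA, then either $\da^{-1}_s(P)\subseteq Y$ or $\da^{-1}_s(P)\subseteq Z$; say the former, so the number of marginalized students at $s$ is zero. Since that count is invariant across stable-dominating mechanisms, $s$ continues to admit zero marginalized students under any such mechanism $M$, i.e.\ $M^{-1}_s(P)\subseteq Y$, so $s$ remains fully segregated. The symmetric argument handles the case $\da^{-1}_s(P)\subseteq Z$.

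I do not expect a genuine obstacle here, since the heavy lifting is done by Proposition~\ref{prop:nomixing}; the only point requiring care is the translation from ``homogeneous cycles in $G^\da(P)$'' to ``group counts at each school are preserved.'' Propositions~\ref{prop:new} and~\ref{prop:advantagedunim} are not strictly needed for the count-preservation statement—they provide additional intuition that specific students are frozen in place—but I would cite them to reinforce that the preservation is not vacuous and that neither group can unilaterally improve its position. The one subtlety to state cleanly is that a stable-dominating matching need not be reachable by a \emph{single} cycle rotation; it may involve several disjoint cycles. Since Proposition~\ref{prop:nomixing} applies to \emph{every} cycle, each such rotation independently preserves the within-group headcount at every school, and the composition of group-preserving permutations remains group-preserving, so the conclusion holds for the full reassignment.
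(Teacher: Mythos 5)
Your proof is correct and rests on the same engine as the paper's --- Proposition~\ref{prop:nomixing} --- but it is organized differently and, in fact, more economically. The paper's proof proceeds by a case analysis on schools (advantaged-only, marginalized-only, mixed) and invokes all three of Propositions~\ref{prop:new}, \ref{prop:nomixing}, and~\ref{prop:advantagedunim}; you instead give a single unified argument: headcounts are invariant by Lemma~\ref{lemma:lema3}, every reassignment decomposes into disjoint envy cycles, each cycle is group-homogeneous by Proposition~\ref{prop:nomixing}, so each vacated seat is refilled from the same group and the per-group count at every school is preserved --- with full segregation then following as the zero-count special case. Your observation that Propositions~\ref{prop:new} and~\ref{prop:advantagedunim} are not logically needed is accurate (indeed, the paper's appeal to Proposition~\ref{prop:new} for ``marginalized students cannot trade into advantaged positions'' is the weakest step of its own argument, since that proposition only asserts the existence of one unimprovable marginalized student). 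The one point to tighten: you attribute the fact that any stable-dominating reassignment is realized by rotating along cycles of $G^{\da}(P)$ to Lemma~\ref{lemma:cycles}, but that lemma only characterizes unimprovable students; the decomposition itself follows from Lemma~\ref{lemma:lema3} (equal headcounts make the seat-replacement map a bijection on moved students, hence a disjoint union of cycles whose edges are envy edges). The paper's proof relies on the same implicit step, so this is a presentational gap rather than a substantive one, but you should derive or cite it explicitly.
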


\begin{proof}
	Proposition \ref{prop:nomixing} establishes that marginalized and advantaged students cannot participate in the same trading cycles. Consider any school $s$ that admits only advantaged students under DA. By Proposition \ref{prop:nomixing}, no marginalized student can participate in any cycle that would give them access to school $s$, since any such cycle would necessarily involve trading with the advantaged students currently at $s$.
	
	Similarly, consider any school that admits only marginalized students under DA. No advantaged student can access this school through any cycle (nor would like to), since cycles cannot mix the two groups.
	
	For mixed schools, Proposition \ref{prop:advantagedunim} shows that advantaged students at such schools are unimprovable, while Proposition \ref{prop:new} ensures that marginalized students cannot trade into positions held by advantaged students at other schools.
	
	Therefore, the demographic composition of every school is perfectly preserved under any stable-dominating mechanism.
\end{proof}

Theorem \ref{thm:preservation} can be understood through an intuitive two-stage perspective. Any stable-dominating mechanism operates as if it first runs DA on advantaged students until a stable matching is formed, then allows marginalized students to apply to remaining seats. Since marginalized students cannot displace advantaged students due to priority structures, applications to schools filled by advantaged students would be automatically rejected. The occupancy invariance property (Lemma \ref{lemma:lema3}) thus applies sequentially: first preserving advantaged student occupancy at each school, then preserving marginalized student occupancy in the remaining capacity.

This two-stage structure reveals why efficiency improvements through stable-dominating mechanisms operate strictly within the constraints of the existing segregation pattern: advantaged students can only trade with other advantaged students, and marginalized students can only trade with other marginalized students. Theorem \ref{thm:preservation} extends naturally to settings with multiple tiers of students, where each tier has universally higher priority than lower tiers. In such cases, students can only trade within their own tier, preserving the exact tier composition at every school. 

This complete rigidity reveals an important limitation of efficiency-based reforms in school choice. While mechanisms like EADA may improve welfare within each group, they cannot bridge divides between groups or create more integrated school environments. Our results imply that the segregation patterns documented under DA in England \citep{terrier2021immediate} and Chile \citep{chileseg} would persist under any stable-dominating mechanism. Similarly, the limitations on disadvantaged students' access to good schools identified by \cite{calsamiglia2023catchment} cannot be addressed through Pareto-efficiency adjustments alone. Any policy aimed at reducing school segregation must therefore rely on interventions beyond Pareto improvements over stable matchings, such as changes to priority structures or quota systems.

To exemplify the preservation of segregation, consider the following school choice problem with eight students and four schools, each with two seats.
	\begin{table}[h!]
	\centering
	\caption{\centering DA allocation in squares, stable-dominating in bold.}
	\label{tab:example1}
	\begin{tabular}{CCCCCCCC|CCCC}
		\toprule
		i_1 & i_2  & i_3 & i_4 & i_5 & i_6 & i_7 & i_8 	& s_1 & s_2 & s_3 & s_4\\
		\midrule
		\pmb{s_2} & \pmb{s_2}  & \pmb{s_1} & \pmb{s_1} & s_1 & s_1 & \ensquare{$\pmb{s_3} $} 	& s_3		& i_1 & i_3& i_1 & i_1\\
		\ensquare{$s_1$} & \ensquare{$s_1$}  & \ensquare{$s_2$} & \ensquare{$s_2$} & s_2 & s_2 & s_4	& \ensquare{$\pmb{s_4}$} 		& i_2 & i_4 & i_2 & i_2 \\
		s_3 & s_3  & s_4 & s_3 & \ensquare{$\pmb{s_3}$} & s_3 & s_2 	& s_1			& i_5 & i_5 & i_3 & i_3 \\
		s_4 & s_4  & s_3 & s_4 & s_4 & \ensquare{$\pmb{s_4}$} & s_1 	& s_2													& i_3 & i_1 & i_4 & i_4 \\
&&&&&&&													& i_4 & i_2 & i_5 & i_5 \\
&&&&&&&													& i_8 & i_6	    & i_7 &  \\
&&&&&&&&i_6&& i_8&\\
&&&&&&&&&& i_6
	\end{tabular}
\end{table}

In the example above, students $\{i_1, \ldots, i_5\}$ are advantaged, whereas $\{i_6, i_7, i_8\}$ are marginalized. 
The DA matching assigns $\{i_1,i_2\}$ to $s_1$, $\{i_3, i_4\}$ to $s_2$, $\{i_5, i_7\}$ to $s_3$, and ${i_6, i_8}$ to $s_4$. 
Three out of four schools are fully segregated, admitting either only advantaged or marginalized students. 
Yet, the DA allocation is not Pareto-efficient: it is dominated by the following allocation generated by any Pareto-efficient and stable-dominating mechanism: $\{i_3,i_4\}$ to $s_1$, $\{i_1, i_2\}$ to $s_2$, $\{i_5, i_7\}$ to $s_3$, and $\{i_6, i_8\}$ to $s_4$. 
Note the preservation of segregation: the three schools that were fully segregated under DA remain segregated under the stable-dominating matching. 

Interestingly, there is an alternative Pareto-efficient allocation with less segregation, namely: $\{i_4,i_6\}$ to $s_1$, $\{i_1, i_2\}$ to $s_2$, $\{i_5, i_7\}$ to $s_3$, and $\{i_3, i_8\}$ to $s_4$. In this allocation, only one school $s_2$ is fully segregated, and furthermore, the rank-efficiency is better than in the stable-dominating outcome (average rank of 1.625 versus 1.75) as well as the Rawlsian inequality (of 3 versus 4 in the stable-dominating allocation). These three advantages come at a cost of more blocking pairs (4 versus only 2 in the stable-dominating allocation, namely ($i_3,s_1$), ($i_3,s_2$), ($i_5,s_1$), ($i_5,s_2$) versus ($i_5,s_1$),($i_5,s_2$)).

Pareto-efficient allocations with reduced segregation can be achieved through DA modifications that incorporate minority reserves \citep[e.g.][]{hafalir2013effective}.\footnote{There is a large literature studying minority reserves in school choice, which typically leads to matchings that do not Pareto dominate DA \citep[e.g.][]{kojima2012school,aygun2021college}. Our focus in this paper is specifically on mechanisms that maintain the Pareto-dominance property over DA, examining what such efficiency-preserving reforms can and cannot achieve.}
Empirical evidence demonstrates that such interventions can reduce segregation without harming efficiency: simulations of Swedish school choice show that reserved seats reduce segregation compared to proximity- or lottery-based priorities while maintaining high shares of students assigned to preferred schools \citep{kessel2018school}, while in Chile such reserves would reduce segregation while improving rank-efficiency \citep{escobar2022segregation}.

\section{Conclusion}
We have shown that stable-dominating mechanisms cannot eliminate three specific problems of DA: its rank-inefficiency, its inequality, and its associated segregation.
Our findings should be interpreted as a cautionary tale about efficiency-adjustments rather than arguing against its need and scope. Indeed, a substantial literature---in particular our companion paper \citep{ortega2025}---demonstrates that most students can benefit from mechanisms that Pareto-dominate DA. The contribution here is to precisely characterize what such mechanisms cannot fix, and the need for additional policies orthogonal to Pareto-efficiency to address such shortcomings.

\singlespacing      
\setlength{\parskip}{-0.3em} 
\bibliographystyle{te} 
\bibliography{bibliogr}

\end{document}